\newcommand{\abs}[1]{\ensuremath{\left| #1 \right|}}
\newcommand{\diam}{\operatorname{diam}}
\renewcommand{\epsilon}{\varepsilon}
\newtheorem{theorem}{Theorem}[section]
\newtheorem{proposition}[theorem]{Proposition}
\theoremstyle{definition}
\newtheorem{remark}[theorem]{Remark}
\newcommand{\tr}{\operatorname{Tr}}
\numberwithin{equation}{section}
\numberwithin{theorem}{section}
\begin{document}

\title{An improved energy argument for the Hegselmann-Krause model}
\author{Anders Martinsson}
\address{Department of Mathematical Sciences, Chalmers University Of Technology and University of Gothenburg, 41296 Gothenburg, Sweden}
\email{andemar@chalmers.se}
\subjclass[2010]{93A14, 39A60, 91D10}
\keywords{Hegselmann-Krause model, energy, freezing time}

\begin{abstract}
We show that the freezing time of the $d$-dimensional Hegselmann-Krause model is $O(n^4)$ where $n$ is the number of agents. This improves the best known upper bound whenever $d\geq 2$.
\end{abstract}
\maketitle

\section{Introduction}
The Hegselmann-Krause bounded confidence model, or simply the HK-model, is a simple model for opinion dynamics, first introduced in \cite{K97} and popularized in \cite{HK02}. In this model, we consider $n$ agents, indexed by integers in $[n]=\{1, 2, \dots, n\}$. Each agent $i$ initially has the opinion $\mathbf{x}_0(i)$, represented by a vector in $\mathbb{V}=\mathbb{R}^d$ for some $d\geq 1$. Two agents consider each others opinions reasonable if their Euclidean distance is at most a constant $\epsilon$, called the confidence radius. The agents update their opinions synchronously in discrete time steps by compromising with all opinions they consider reasonable. More precisely, for each $t=0, 1, \dots$ we recursively define
\begin{equation}\label{eq:HKupdate}
\mathbf{x}_{t+1}(i) = \frac{1}{\abs{\mathcal{N}_t(i)}} \sum_{j\in \mathcal{N}_t(i)} \mathbf{x}_t(j).
\end{equation}
where $\mathcal{N}_t(i) = \{j \in [n]: \| \mathbf{x}_t(i)-\mathbf{x}_t(j) \|_2 \leq \epsilon \}$. We will refer to \eqref{eq:HKupdate} as the HK update rule. In this paper we will always assume that $\epsilon = 1$.

Arguably, the most fundamental result about the HK-model is that, for any initial configuration, the system freezes after a finite number of time steps. That is, for sufficiently large $T$ we have that $\mathbf{x}_T = \mathbf{x}_{t}$ for any $t \geq T$. We will refer to the smallest such $T$ as the \emph{freezing time} of the system, and let $T_d(n)$ denote the maximal freezing time of any configuration of $n$ agents with $d$-dimensional opinions.


In \cite{C11} it is shown that, for any $d$, $T_d(n)=n^{O(n)}$ and is further conjectured that $T_d(n)$ grows polynomially in $n$. In dimension one, this was first shown in \cite{MBCF07} who proved that $T_1(n)=O(n^5)$. This has later been improved to $T_1(n)=O(n^3)$ in \cite{BBCN13}, see also \cites{TN11,MT12+}. Polynomial freezing time in arbitrary dimension was shown in \cite{BBCN13}, which obtains the bound $T_d(n) = O(n^{10} d^2)$. As opinions will always be contained in the affine space spanned by the initial opinions, we may assume that $d\leq n-1$. Hence, this implies a uniform upper bound of $T_d(n) = O(n^{12})$ independent of $d$. In a recent paper \cite{EB14+}, this was improved to $T_d(n)=O(n^8)$.

The problem of finding lower bounds on $T_d(n)$ has received less attention. In \cite{BBCN13} it was noted that it is possible to obtain freezing times of order $n^2$ for any $d\geq 2$ by placing opinions equidistantly on a circle. More recently, \cite{HW14+} shows that a certain ``dumbbell'' configuration achieves freezing time of order $n^2$ also in $d=1$.

The aim of this paper is to prove the following upper bound on the freezing time.
\begin{theorem}\label{thm:freeze}
The maximal freezing time for the $n$-agent HK model in any dimension is $O(n^4)$.
\end{theorem}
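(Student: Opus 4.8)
The plan is to run an energy (Lyapunov) argument, but organized around the linear-algebraic structure of the update. Writing $d_t(i)=\abs{\mathcal{N}_t(i)}$ and letting $D_t$ be the diagonal matrix of these degrees and $A_t$ the symmetric adjacency matrix of the interaction graph $G_t$ (with self-loops, so that $d_t(i)\geq 1$), the update \eqref{eq:HKupdate} reads $\mathbf{x}_{t+1}=P_t\mathbf{x}_t$ with $P_t=D_t^{-1}A_t$. The point of this is that $P_t$ is self-adjoint with respect to the degree-weighted inner product $\langle \mathbf{u},\mathbf{v}\rangle_{D_t}=\sum_i d_t(i)\,\mathbf{u}(i)\cdot\mathbf{v}(i)$, and its Dirichlet form is exactly the sum of squared edge lengths:
\begin{equation*}
\langle (I-P_t)\mathbf{x}_t,\mathbf{x}_t\rangle_{D_t}=\tfrac12\sum_{j\in\mathcal{N}_t(i)}\sum_i\|\mathbf{x}_t(i)-\mathbf{x}_t(j)\|_2^2=\sum_{\{i,j\}\in G_t}\|\mathbf{x}_t(i)-\mathbf{x}_t(j)\|_2^2\geq 0 .
\end{equation*}
Since $P_t$ is self-adjoint and stochastic its spectrum lies in $[-1,1]$, and because every vertex carries a self-loop one also gets the laziness bound $\lambda_{\min}(P_t)\geq -1+\tfrac{2}{n}$. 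I would first record a reduction: opinions stay in the convex hull of the initial ones, so the diameter is non-increasing, and within any connected component of $G_t$ any two agents are joined by a chain of at most $n-1$ unit-length edges, so every relevant cluster has diameter $O(n)$. This bounds the natural energy $\Phi_t=\langle \mathbf{x}_t-\mathbf{m},\mathbf{x}_t-\mathbf{m}\rangle_{D_t}$ (with $\mathbf{m}$ the degree-weighted centroid) by a fixed polynomial in $n$.

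Next I would establish the dissipation inequality. Expanding $\Phi_t-\Phi_{t+1}=\langle (I-P_t^2)\mathbf{x}_t,\mathbf{x}_t\rangle_{D_t}$ and comparing it eigenvalue-by-eigenvalue with the squared displacement $\|\mathbf{x}_{t+1}-\mathbf{x}_t\|_{D_t}^2=\langle (I-P_t)^2\mathbf{x}_t,\mathbf{x}_t\rangle_{D_t}$, the ratio on the $\lambda$-eigenspace is $(1+\lambda)/(1-\lambda)$, which by the laziness bound is at least $1/n$. Hence each step dissipates energy at least $\tfrac1n$ times the (weighted) squared movement, and summing over $t$ shows that the total squared movement $\sum_t\|\mathbf{x}_{t+1}-\mathbf{x}_t\|_{D_t}^2$ is polynomially bounded. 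The one subtlety here is that the weighting $D_t$ and the centroid $\mathbf{m}$ drift as edges appear and disappear; I would control this by noting that a newly created edge has length at most $\epsilon=1$, so edge creation can only decrease the edge-length part of the energy, and account for the degree changes as lower-order corrections.

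The \textbf{main obstacle} is converting a bound on total movement into a bound on the \emph{number} of steps, and this is exactly where the argument has to be sharpened to reach $O(n^4)$. A bound on $\sum_t\|\mathbf{x}_{t+1}-\mathbf{x}_t\|^2$ does not by itself limit the step count, because a single step can move agents by an arbitrarily small amount: the dangerous regime is a cluster sitting close to a slowly-contracting near-eigenvector of a fixed graph $G_t$, where the displacement per step is only $O(1/n^2)$ and the interaction graph stays constant for a long stretch. To close this I would aim for a uniform lower bound on per-step progress valid even in this regime, measured in the right potential, together with the structural observation that once a cluster has diameter at most $\epsilon$ its graph is complete and it collapses to its centroid — and hence freezes — in a single step. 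The technical heart of the proof is thus to design the energy (and its accounting of edge events) so that every non-frozen step makes progress at least $1/\mathrm{poly}(n)$ against a potential of range $\mathrm{poly}(n)$, with the exponents balanced to yield $O(n^4)$; the geometric estimates of the previous two paragraphs are the routine part, while this balancing is where the improvement over the earlier $O(n^8)$ bound must come from.
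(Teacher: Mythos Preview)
Your proposal sets up the linear-algebraic framework correctly but, as you yourself say in the last paragraph, it does not actually close the argument: you end by announcing that ``the technical heart of the proof'' is to design a potential and a per-step lower bound whose exponents balance to give $O(n^4)$, without supplying that design. The dissipation inequality you do prove, $\Phi_t-\Phi_{t+1}\gtrsim \tfrac1n\|\mathbf{x}_{t+1}-\mathbf{x}_t\|_{D_t}^2$, only controls the \emph{total} movement and, exactly as you note, does not bound the number of steps because the per-step displacement can be as small as $O(1/n^2)$. So the proposal is a plan with its key step missing, not a proof.

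The paper fills that gap with two ideas you do not have. First, instead of comparing the energy drop to the squared \emph{movement} $\langle (I-P_t)^2\mathbf{x}_t,\mathbf{x}_t\rangle_{D_t}$, it compares it to the \emph{active energy} $\mathcal{E}_{active}(\mathbf{x}_t)=\sum_{i\sim_t j}\|\mathbf{x}_t(i)-\mathbf{x}_t(j)\|_2^2=2\langle (I-P_t)\mathbf{x}_t,\mathbf{x}_t\rangle_{D_t}$: one gets $\mathcal{E}(\mathbf{x}_t)-\mathcal{E}(\mathbf{x}_{t+1})\geq (1-\lambda_t^2)\,\mathcal{E}_{active}(\mathbf{x}_t)$, where $\lambda_t$ is the second-largest eigenvalue in modulus of $P_t$. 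This needs the full spectral gap of $P_t$, not just the laziness bound on $\lambda_{\min}$; the standard estimate $1-\lambda_t\geq 1/(n^2\diam G_t)$ is invoked. Second, and this is the geometric point that makes the exponents balance, on any shortest path in $G_t$ every second edge must have Euclidean length at least $\tfrac12$ (otherwise the path could be shortcut), so $\mathcal{E}_{active}(\mathbf{x}_t)=\Omega(\diam G_t)$. Multiplying, the $\diam G_t$ factors cancel and every non-merging, non-frozen step drops the energy by $\Omega(1/n^2)$. Since the truncated energy $\mathcal{E}(\mathbf{x})=\sum_{i,j}\min\{\|\mathbf{x}(i)-\mathbf{x}(j)\|_2^2,1\}$ is at most $n^2$, there are $O(n^4)$ such steps; merging steps number at most $n-1$.

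A secondary issue: your degree-weighted potential $\Phi_t$ genuinely changes when $D_t$ changes, and ``lower-order corrections'' is not enough to dismiss this --- a single edge appearing or disappearing can shift $\Phi_t$ by an $\Omega(1)$ amount, swamping a $1/\mathrm{poly}(n)$ per-step gain. The paper's truncated energy $\mathcal{E}$ sidesteps this entirely: the $\min\{\cdot,1\}$ makes the contribution of any pair continuous across the threshold $\|\mathbf{x}(i)-\mathbf{x}(j)\|_2=1$, so there is no bookkeeping for edge events at all.
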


\section{Proof of Theorem 1.1}

For a given sequence $\{\mathbf{x}_t\} \in \mathbb{V}^n$, we define the \emph{communication graph} $G_t$ as the graph with vertex set $[n]$ and where $i$ and $j$ are connected by an edge if $\|\mathbf{x}_t(i)-\mathbf{x}_t(j)\|_2 \leq 1$. Note that all vertices in $G_t$ have an edge going to themselves. We will here write $i \sim_t j$ to denote that $i$ is adjacent to $j$ in $G_t$. We further let $P_t$ denote the transition matrix for a simple random walk on $G_t$. Hence, if we think of $\mathbf{x}_t$ as a $n\times d$ matrix, we can formulate the HK dynamics as
\begin{equation}
\mathbf{x}_{t+1} = P_t \mathbf{x}_t.
\end{equation}

For a configuration $\mathbf{x}=(\mathbf{x}(1), \mathbf{x}(2), \dots, \mathbf{x}(n))\in \mathbb{V}^n$ of $n$ agents, we define its energy as
\begin{equation}\label{eq:energydef}
\mathcal{E}(\mathbf{x}) = \sum_{i=1}^n\sum_{j=1}^n \min\left\{ \|\mathbf{x}(i)-\mathbf{x}(j)\|_2^2, 1\right\}.
\end{equation}
Note that the energy of any configuration lies between $0$ and $n^2$. Let $\{\mathbf{x}_t\}$ be a sequence in $\mathbb{V}^n$ which satisfies \eqref{eq:HKupdate}. This energy function has the important property that $\mathcal{E}(\mathbf{x}_t)$ is non-increasing in $t$. The following inequality has played a central roll in obtaining the upper bounds on the high-dimensional freezing time in \cites{BBCN13,EB14+}.
\begin{proposition}
\begin{equation}
\mathcal{E}(\mathbf{x}_t)-\mathcal{E}(\mathbf{x}_{t+1}) \geq 4\|\mathbf{x}_{t+1}-\mathbf{x}_t\|_2^2.
\end{equation}
\end{proposition}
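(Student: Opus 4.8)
The plan is to reduce the estimate to a single coordinate, rewrite the energy drop as a difference of graph-Laplacian quadratic forms on the communication graph $G_t$, and then recognize the remaining inequality as positive semidefiniteness of a signless Laplacian. To set up notation, fix $t$ and let $A_t$ be the adjacency matrix of $G_t$ including its self-loops, so that the $i$-th row sum equals $|\mathcal{N}_t(i)|$; let $D_t$ be the corresponding diagonal degree matrix and $L_t = D_t - A_t$. With this convention the transition matrix satisfies $P_t = D_t^{-1}A_t = I - D_t^{-1}L_t$.

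The key first step uses the two defining features of the truncated square. For every adjacent pair $i \sim_t j$ we have $\|\mathbf{x}_t(i)-\mathbf{x}_t(j)\|_2 \le 1$, so the minimum in \eqref{eq:energydef} is attained by the squared distance; and for every pair the minimum is at most the squared distance. Applying the former at time $t$ and the latter at time $t+1$, while discarding all non-adjacent pairs (whose contribution to $\mathcal{E}(\mathbf{x}_t)$ equals $1$ and to $\mathcal{E}(\mathbf{x}_{t+1})$ is at most $1$), gives
\begin{equation*}
\mathcal{E}(\mathbf{x}_t)-\mathcal{E}(\mathbf{x}_{t+1}) \ \ge\ \sum_{i\sim_t j}\Big(\|\mathbf{x}_t(i)-\mathbf{x}_t(j)\|_2^2-\|\mathbf{x}_{t+1}(i)-\mathbf{x}_{t+1}(j)\|_2^2\Big).
\end{equation*}
The minimum couples the $d$ coordinates, so $\mathcal{E}$ itself does not split; but the right-hand side above is a sum of full squared distances, which does split coordinatewise. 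Writing $x,y\in\R^n$ for the column vectors of $\mathbf{x}_t,\mathbf{x}_{t+1}$ in a fixed coordinate, each coordinate contributes $2\big(x^\top L_t x - y^\top L_t y\big)$, and the update acts on it by $y = P_t x$.

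It then suffices to show, for each coordinate, that $x^\top L_t x - y^\top L_t y \ge 2\|y-x\|_2^2$. Setting $z = x-y = (I-P_t)x = D_t^{-1}L_t x$, so that $L_t x = D_t z$, a short computation gives
\begin{equation*}
x^\top L_t x - y^\top L_t y = 2z^\top L_t x - z^\top L_t z = 2z^\top D_t z - z^\top L_t z = z^\top(D_t+A_t)z,
\end{equation*}
where the last equality uses $L_t = D_t - A_t$. Expanding the signless Laplacian over the edges of $G_t$ then yields
\begin{equation*}
z^\top(D_t+A_t)z - 2\|z\|_2^2 = \sum_{\substack{\{i,j\}:\ i\sim_t j\\ i\ne j}}(z_i+z_j)^2 \ \ge\ 0 ,
\end{equation*}
which is precisely the per-coordinate bound needed. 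Summing over all $d$ coordinates and combining with the first inequality produces the factor $4$ and proves the proposition.

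I expect the only genuinely delicate point to be the algebraic identity $x^\top L_t x - y^\top L_t y = z^\top(D_t+A_t)z$: it is what converts the monotone energy drop into a manifestly nonnegative quantity, and it is the reason the constant comes out as exactly $4$ rather than something smaller. The substitution $L_t x = D_t z$, equivalently $z=(I-P_t)x$, is the crux; once it is in place, positivity of the signless Laplacian and the coordinatewise decomposition are routine.
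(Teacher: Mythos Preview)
Your proof is correct. The reduction to the graph Laplacian on $G_t$, the substitution $L_t x = D_t z$ yielding $x^\top L_t x - y^\top L_t y = z^\top(D_t+A_t)z$, and the signless-Laplacian identity $z^\top(D_t+A_t)z - 2\|z\|_2^2 = \sum_{\{i,j\}:\,i\sim_t j,\,i\ne j}(z_i+z_j)^2$ all check out; the self-loops in $G_t$ are precisely what make the $2\|z\|_2^2$ term appear, and the factor $4$ then falls out of the combination $2\cdot 2$ as you describe.

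As for comparison: the paper does not give its own argument for this proposition but simply refers to Theorem~2 of \cite{RMF08}. Your write-up therefore supplies a self-contained proof where the paper outsources one. The argument in \cite{RMF08} is in the same spirit (a Lyapunov/quadratic-form computation for the averaging map), so your approach is not fundamentally different from the cited source, but it is arguably cleaner in isolating the signless-Laplacian step as the reason for the constant $4$.
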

\begin{proof}
See Theorem 2 in \cite{RMF08}.
\end{proof}

Here, we propose another way to estimate the energy decrement in a step in the HK-model. For a given state $\mathbf{x}$ and for any ordered pair $(i, j) \in [n]^2$, we say that $(i, j)$ is \emph{active} if $\|\mathbf{x}(i) - \mathbf{x}(j)\|_2 \leq 1$. We consequently define the active part of the energy of $\mathbf{x}$ as 
\begin{equation}
\mathcal{E}_{active}(\mathbf{x}) = \sum_{(i, j) \text{ active}} \|\mathbf{x}(i)-\mathbf{x}(j)\|_2^2
\end{equation}
\begin{proposition}\label{prop:energydec}
For each $t\geq 0$, let
\begin{equation}
\lambda_t = \max\left\{ \abs{\lambda} : \lambda\neq 1\text{ is an eigenvalue of }P_t\right\}.
\end{equation}
Then
\begin{equation}
\mathcal{E}(\mathbf{x}_t) - \mathcal{E}(\mathbf{x}_{t+1}) \geq   \left(1-\lambda_t^2\right) \mathcal{E}_{active}(\mathbf{x}_t).
\end{equation}
\end{proposition}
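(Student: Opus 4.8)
The plan is to reduce the statement to a clean spectral inequality for the single map $\mathbf{x}_{t+1}=P_t\mathbf{x}_t$ and then prove that inequality one eigenvalue at a time. Throughout I fix $t$ and abbreviate $P=P_t$, $\mathbf{x}=\mathbf{x}_t$, $\mathbf{y}=\mathbf{x}_{t+1}=P\mathbf{x}$, and let $A\subseteq[n]^2$ be the set of ordered pairs active at time $t$, i.e.\ the directed edges of $G_t$ (including loops). For a configuration $\mathbf{z}$ I write $D_A(\mathbf{z})=\sum_{(i,j)\in A}\|\mathbf{z}(i)-\mathbf{z}(j)\|_2^2$, so that $D_A(\mathbf{x})=\mathcal{E}_{active}(\mathbf{x})$ by definition.

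The first step is to remove the truncation in $\mathcal{E}$. I split the double sum defining $\mathcal{E}(\mathbf{x}_{t+1})$ according to whether a pair lies in $A$ or not. Every inactive pair contributes at most $1$ to $\mathcal{E}(\mathbf{x}_{t+1})$ because of the $\min$ with $1$; this is exactly what it contributes to $\mathcal{E}(\mathbf{x}_t)$, so these terms cancel in the difference. Every active pair contributes at most $\|\mathbf{y}(i)-\mathbf{y}(j)\|_2^2$ since $\min\{a,1\}\le a$. Together with the identity $\mathcal{E}(\mathbf{x}_t)=\mathcal{E}_{active}(\mathbf{x}_t)+\#\{\text{inactive pairs}\}$ this gives
\[
\mathcal{E}(\mathbf{x}_t)-\mathcal{E}(\mathbf{x}_{t+1})\ \ge\ D_A(\mathbf{x})-D_A(\mathbf{y}),
\]
so it suffices to prove the truncation-free bound $D_A(\mathbf{y})\le \lambda_t^2\,D_A(\mathbf{x})$.

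Next I pass to the spectral picture. Since $D_A$ splits as a sum over the $d$ coordinates, it is enough to treat a single coordinate vector $x\in\R^n$ with $y=Px$. Let $\pi$ be the degree measure $\pi_i=\abs{\mathcal{N}_t(i)}$ and $\langle f,g\rangle_\pi=\sum_i\pi_i f(i)g(i)$. Because $\pi_iP_{ij}=\mathbbm{1}[i\sim_t j]$ is symmetric, $P$ is reversible with respect to $\pi$, hence self-adjoint on $\ell^2(\pi)$ with real eigenvalues in $[-1,1]$; and the associated Dirichlet form satisfies
\[
D(x):=\sum_{i\sim_t j}(x(i)-x(j))^2 = 2\langle x,(I-P)x\rangle_\pi .
\]
Expanding $x=\sum_m c_m\phi_m$ in a $\pi$-orthonormal eigenbasis with $P\phi_m=\mu_m\phi_m$ yields $D(x)=2\sum_m c_m^2(1-\mu_m)$ and, since $Px=\sum_m c_m\mu_m\phi_m$, $D(Px)=2\sum_m c_m^2\mu_m^2(1-\mu_m)$.

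The whole statement then comes down to the termwise inequality $\mu_m^2(1-\mu_m)\le\lambda_t^2(1-\mu_m)$. For $\mu_m=1$ both sides vanish, and for every $\mu_m\neq1$ we have $1-\mu_m>0$ together with $\mu_m^2\le\lambda_t^2$ by the very definition of $\lambda_t$, so the inequality holds in all cases. Summing over $m$ gives $D(Px)\le\lambda_t^2 D(x)$, and summing over coordinates gives $D_A(\mathbf{y})\le\lambda_t^2 D_A(\mathbf{x})$, which completes the argument. I expect the only genuinely delicate point to be the truncation bookkeeping in the first step: one must check that an inactive pair contributes at most its full weight $1$ at time $t+1$ regardless of whether it becomes active, so that it cancels, while active pairs are safely bounded by the untruncated squared distance. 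Everything after that reduction is a routine reversibility-plus-spectral-decomposition computation.
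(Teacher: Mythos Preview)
Your proof is correct and follows essentially the same route as the paper's: both bound $\mathcal{E}(\mathbf{x}_{t+1})$ by replacing the active/inactive partition at time $t+1$ with that at time $t$, reduce to the Dirichlet-form inequality $D_A(P\mathbf{x})\le\lambda_t^2 D_A(\mathbf{x})$, and prove it coordinate-wise via an eigenvector expansion. The only cosmetic difference is that the paper writes the Dirichlet form as $2\tr\bigl(\mathbf{x}^\top(D_t-A_t)\mathbf{x}\bigr)$ and symmetrizes via $B_t=D_t^{-1/2}A_tD_t^{-1/2}$, whereas you phrase the identical computation in the $\ell^2(\pi)$ language of reversible chains.
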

\begin{proof}
Let $A_t$ denote the adjacency matrix of $G_t$, and let $D_t$ denote its degree matrix, that is, the diagonal matrix whose $(i,i)$:th element is given by the degree of $i$. Recall that every vertex in $G_t$ has an edge to itself. Observe that $P_t = D_t^{-1} A_t$. We have
\begin{align*}
\mathcal{E}(\mathbf{x}_t) &= \sum_{i \sim_t j} \|\mathbf{x}_t(i)-\mathbf{x}_t(j)\|_2^2 +\sum_{i \not \sim_t j} 1\\
&= 2 \tr\left( \mathbf{x}_t^\top \left(D_t - A_t\right) \mathbf{x}_t\right) + \sum_{i \not \sim_t j} 1,
\end{align*}
where $\tr(\cdot)$ denotes trace. Here, we again interpret $\mathbf{x}_t$ as an $n\times d$ matrix. Similarly, we have
\begin{align*}
\mathcal{E}_{active}(\mathbf{x}_t) = \sum_{i \sim_t j} \|\mathbf{x}_t(i)-\mathbf{x}_t(j)\|_2^2 = 2 \tr\left( \mathbf{x}_t^\top\left(D_t - A_t\right) \mathbf{x}_t\right),
\end{align*}
and
\begin{align*}
\mathcal{E}(\mathbf{x}_{t+1}) &= \sum_{i \sim_{t+1} j} \|\mathbf{x}_{t+1}(i)-\mathbf{x}_{t+1}(j)\|_2^2 +\sum_{i \not \sim_{t+1} j} 1\\
&\leq \sum_{i \sim_{t} j} \|\mathbf{x}_{t+1}(i)-\mathbf{x}_{t+1}(j)\|_2^2 +\sum_{i \not \sim_{t} j} 1\\
&= 2 \tr\left(\mathbf{x}_{t+1}^{\top} \left(D_t - A_t\right) \mathbf{x}_{t+1}\right) + \sum_{i \not \sim_t j} 1\\
&= 2 \tr\left(\mathbf{x}_{t}^{\top} A_t D_t^{-1} \left(D_t - A_t\right) D_t^{-1} A_t \mathbf{x}_{t}\right) + \sum_{i \not \sim_t j} 1.
\end{align*}
Hence, it suffices to show that
\begin{equation}\label{eq:sufficient}
\tr\left(\mathbf{x}_{t}^{\top} A_t D_t^{-1} \left(D_t - A_t\right) D_t^{-1} A_t \mathbf{x}_{t}\right) \leq \lambda_t^2 \tr\left(\mathbf{x}_t^{\top} \left(D_t - A_t\right) \mathbf{x}_t\right).
\end{equation}

Let $\mathbf{y}_t = D_t^{1/2} \mathbf{x}_t$ and $B_t = D_t^{1/2} P_t D_t^{-1/2}=D_t^{-1/2} A_t D_t^{-1/2}$. It is straight-forward to show that \eqref{eq:sufficient} simplifies to
\begin{equation}
\tr\left(\mathbf{y}_t^{\top} B_t \left(I-B_t\right) B_t \mathbf{y}_t\right) \leq \lambda_t^2 \tr\left(\mathbf{y}_t^{\top} \left(I-B_t\right) \mathbf{y}_t\right).
\end{equation}
When $d=1$, this inequality follows by standard linear algebra: write $\mathbf{y}_t$ as a linear combination of eigenvectors of $B_t$ and observe that $B_t$ is a symmetric matrix which is similar to $P_t$. For the case when $d\geq 2$, let $e_1, \dots, e_d$ denote the standard basis of $\mathbb{V}$. We can rewrite \eqref{eq:sufficient} as
\begin{equation}
\sum_{i=1}^d \left(\mathbf{y}_t e_i\right)^{\top} B_t \left(I-B_t\right) B_t \mathbf{y}_t e_i \leq   \sum_{i=1}^d \lambda_t^2 \left(\mathbf{y}_t^{\top} e_i\right) \left(I-B_t\right) \mathbf{y}_t e_i.
\end{equation}
By the one-dimensional case, we know that this inequality holds term-wise.
\end{proof}

\begin{proposition}
For any $t\geq 0$, we have
\begin{equation}
\abs{\lambda_t} \leq 1 - \frac{1}{n^2 \diam(G_t)}
\end{equation}
where $\diam(G_t)$ denotes the graph diameter of $G_t$. If $G_t$ is not connected, we interpret $\diam(G_t)$ as the largest diameter of any connected component of $G_t$.
\end{proposition}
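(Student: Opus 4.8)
The plan is to work with the symmetric matrix $B_t = D_t^{-1/2} A_t D_t^{-1/2}$, which is similar to $P_t$ and hence has the same (real) eigenvalues, all lying in $[-1,1]$. The top eigenvalue $1$ is attained with eigenvectors $D_t^{1/2}\mathbf{1}_C$, one for each connected component $C$ of $G_t$. Writing $\lambda_2$ for the largest eigenvalue strictly below $1$ and $\lambda_{\min}$ for the smallest, we have $\abs{\lambda_t} = \max\{\lambda_2,\,-\lambda_{\min}\}$, so it suffices to prove $1-\lambda_2 \geq \tfrac{1}{n^2\diam(G_t)}$ and $1+\lambda_{\min}\geq \tfrac{1}{n^2\diam(G_t)}$. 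Since the spectrum of $B_t$ splits over the connected components of $G_t$, I would establish both bounds for a single connected component and then take the worst one; as each component has diameter at most $\diam(G_t)$ and at most $n$ vertices, the stated diameter convention makes this legitimate.

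For the second-largest eigenvalue, Courant–Fischer gives $1 - \lambda_2 = \min \tfrac{v^\top (I - B_t) v}{\|v\|^2}$ over $v$ orthogonal to the top eigenspace. Substituting $v = D_t^{1/2} w$ (so that $v\perp D_t^{1/2}\mathbf{1}_C$ becomes $\sum_{i\in C} d_i w_i = 0$, where $d_i$ is the degree of $i$) turns this into
\[
1 - \lambda_2 = \min_w \frac{\tfrac12\sum_{i\sim_t j}(w_i - w_j)^2}{\sum_i d_i w_i^2},
\]
the minimum being over $w$ with weighted mean zero on each component. With $m = \sum_i d_i$, I would use the mean-zero identity $\sum_i d_i w_i^2 = \tfrac{1}{2m}\sum_{i,j}d_i d_j(w_i - w_j)^2$ and estimate each term by choosing a shortest path $\gamma_{ij}$ from $i$ to $j$, of length at most $\diam(G_t)$, and applying Cauchy–Schwarz:
\[
(w_i - w_j)^2 \leq \diam(G_t)\sum_{e\in\gamma_{ij}}(\Delta_e w)^2,
\]
where $\Delta_e w$ is the difference of $w$ across the edge $e$. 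Swapping the order of summation, each edge $e$ collects total weight $\sum_{(i,j):\,e\in\gamma_{ij}} d_i d_j \le m^2$, while $\sum_e (\Delta_e w)^2 = \tfrac12\sum_{i\sim_t j}(w_i - w_j)^2$ is exactly the numerator. This yields $\sum_i d_i w_i^2 \le \tfrac{\diam(G_t)\,m}{2}\cdot(\text{numerator})$, hence $1-\lambda_2 \ge \tfrac{2}{m\,\diam(G_t)} \ge \tfrac{2}{n^2\diam(G_t)}$, using $m \le n^2$.

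The smallest eigenvalue is easy and is where the self-loops pay off. Again writing $v = D_t^{1/2}w$, we get $1 + \lambda_{\min} = \min_w \tfrac{w^\top(D_t + A_t) w}{\sum_i d_i w_i^2}$, and
\[
w^\top(D_t + A_t)w = \tfrac12\sum_{i\sim_t j}(w_i + w_j)^2 \ge \tfrac12\sum_i (2w_i)^2 = 2\|w\|^2,
\]
where the inequality retains only the self-loop pairs $(i,i)$, which exist because $A_t(i,i) = 1$ for every $i$. Since $\sum_i d_i w_i^2 \le n\|w\|^2$, this gives $1 + \lambda_{\min}\ge \tfrac2n$, comfortably exceeding $\tfrac{1}{n^2\diam(G_t)}$. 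Combining the two bounds gives $\abs{\lambda_t}\le 1 - \tfrac{1}{n^2\diam(G_t)}$.

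The main work lies in the Poincaré-type estimate for $\lambda_2$: the $n^2\diam$ scaling is born from the congestion bookkeeping (the edge weight $m^2$ together with $m\le n^2$), and one must keep the shortest-path argument confined to a single component. The negative-eigenvalue bound is essentially free thanks to the self-loops. A minor case to dispose of is the degenerate one in which $G_t$ has no edges between distinct agents, where no eigenvalue differs from $1$ and the statement is vacuous.
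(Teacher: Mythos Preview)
Your argument is correct. The paper does not actually prove this proposition: it simply cites Corollary~13.24 of Levin--Peres--Wilmer, \emph{Markov chains and mixing times}, and notes that one may reduce to the connected case. You have instead supplied a self-contained proof, via the standard Poincar\'e/canonical-paths estimate for $1-\lambda_2$ and the self-loop trick for $1+\lambda_{\min}$. Both routes are standard; yours has the advantage of being explicit (and in fact yields the slightly stronger constant $\tfrac{2}{n^2\diam(G_t)}$), while the paper's has the advantage of brevity.
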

\begin{proof}
See for instance Corollary 13.24 in \cite{mixing}. Note that it suffices to consider the case where $G_t$ is connected.
\end{proof}
\begin{proof}[Proof of Theorem \ref{thm:freeze} ]
We call a time $t=0, 1, \dots$ a merging time if two agents with different opinions at time $t$ move to the same opinion at time $t+1$. As merges are irreversible, there can at most be $n-1$ such times.

Assume that $t<T$ is not a merging time. Then $\diam(G_t) \geq 2$. Observe that for any $i, j \in [n]$, every second edge in a minimal path from $i$ to $j$ must have length at least $\frac{1}{2}$, hence $\mathcal{E}_{active}(\mathbf{x}_t) = \Omega\left( \diam(G_t) \right)$. Applying Proposition \ref{prop:energydec}, it follows that the energy decrement in this step is $\Omega\left(\frac{1}{n^2}\right)$, and can hence occur at most $O(n^4)$ times.
\end{proof}

\begin{remark}
The energy argument presented here is optimal in the sense that there are configurations where the energy decrement is of order $\frac{1}{n^2}$. In particular, for the dumbbell in \cite{HW14+}, this is the case during the first $\Theta(n^2)$ time steps until the communication graph changes.
\end{remark}

\section*{Acknowledgements}
The author would like to thank Peter Hegarty and Edvin Wedin for helpful discussions.

\begin{bibdiv}
\begin{biblist}

\bib{K97}{article}{
	author={U. Krause},
	title={Soziale Dynamiken mit vielen Interakteuren, eine Problemskizze},
	book={
		title={Modellierung und Simulation von Dynamiken mit vielen interagierenden Akteuren},
		editor={U. Krause},
		editor={M. St\"ockler},
		publisher={	Universit\"at Bremen},
		date={1997}
	}
}

\bib{HK02}{article}{
author	={R. Hegselmann},
author	={U. Krause},
title		={Opinion dynamics and bounded confidence: models, analysis and simulations},
journal ={Journal of Artificial Societies and Social Simulation},
volume  ={5},
number  ={3},
date    ={2002}
}

\bib{C11}{article}{
	author={B. Chazelle},
	title={The total s-energy of a multiagent system},
	journal={SIAM Journal on Control and Optimization},
	volume={49},
	number={4},
	pages={1680--1706}
}

\bib{MBCF07}{article}{
	author={S. Martinez},
	author={F. Bullo},
	author={J. Cortes},
	author={E. Frazzoli},
	title={On Synchronous Robotic Networks - Part II: Time Complexity of Rendezvous and Deployment Algorithms},
	journal={IEEE Trans. Automat. Control},
	volume={52},
	number={12},
	pages={2214--2226},
	date={2007}
}

\bib{BBCN13}{article}{
	author={A. Bhattacharya},
	author={M. Braverman},
	author={B. Chazelle},
	author={H. L. Nguyen},
 title = {On the Convergence of the Hegselmann-Krause System},
 booktitle = {Proceedings of the 4th Conference on Innovations in Theoretical Computer Science},
 series = {ITCS '13},
 year = {2013},
 isbn = {978-1-4503-1859-4},
 location = {Berkeley, California, USA},
 pages = {61--66},
 numpages = {6},
 url = {http://doi.acm.org/10.1145/2422436.2422446},
 doi = {10.1145/2422436.2422446},
 acmid = {2422446},
 publisher = {ACM},
 address = {New York, NY, USA},
 keywords = {convergence, hegselmann-krause system, opinion dynamics},
} 

\bib{MT12+}{article}{
	author={S. Mohajer},
	author={B. Touri},
	title={On convergence rate of scalar Hegselmann-Krause dynamics},
	eprint={http://arxiv.org/pdf/1211.4189v1.pdf}
}

\bib{TN11}{article}{
	author={B. Touri},
	author={A. Nedic},
	title={Discrete-time opinion dynamics.},
	booktitle={In Signals, Systems and Computers(ASILOMAR), 2011 Conference Record of the Forty Fifth Asilomar Conference},
	pages={1172--1176},
	date={2011}
}

\bib{EB14+}{article}{
	author={S. R. Etesami},
	author={T. Basar},
	title={Game-Theoretic Analysis of the Hegselmann-Krause Model for Opinion Dynamics in Finite Dimensions},
	eprint={http://arxiv.org/pdf/1412.6546v1.pdf}
}


\bib{HW14+}{article}{
year={2015},
issn={0179-5376},
journal={Discrete \& Computational Geometry},
doi={10.1007/s00454-014-9657-7},
title={A Quadratic Lower Bound for the Convergence Rate in the One-Dimensional Hegselmann-Krause Bounded Confidence Dynamics},
url={http://dx.doi.org/10.1007/s00454-014-9657-7},
publisher={Springer US},
keywords={Hegselmann-Krause Model; Convergence rate; Dumbbell graph; 93A14; 52C99; 91D10},
author={Wedin, E.},
author={Hegarty, P.},
pages={1-9},
}

\bib{RMF08}{article}{
	author={M. Roozbehani},
	author={A. Megretski},
	author={E. Frazzoli},
	title={Lyapunov analysis of quadratically symmetric neighborhood consensus algorithms},
	booktitle={CDC},
	PAGES={2252--2257},
	YEAR={2008},
	publisher={IEEE}
}

\bib{mixing}{book}{
   author={Levin, D. A.},
   author={Peres, Y.},
   author={Wilmer, E. L.},
   title={Markov chains and mixing times},
   note={With a chapter by James G. Propp and David B. Wilson},
   publisher={American Mathematical Society, Providence, RI},
   date={2009},
   pages={xviii+371},
   isbn={978-0-8218-4739-8},
   review={\MR{2466937 (2010c:60209)}},
   eprint={http://research.microsoft.com/en-us/um/people/peres/markovmixing.pdf}
}

\end{biblist}
\end{bibdiv}

\end{document}